\documentclass[10pt]{llncs}
\usepackage{amssymb,amsmath}
\usepackage{epsfig,graphics}
\usepackage{llncsdoc}
\usepackage{enumerate}

\newcommand{\abs}[1]{\lvert #1 \rvert}
\newcommand{\ldotdot}{ \, . \, . \, }

\newtheorem{observation}[theorem]{Observation}
\begin{document}

\title{On Hardness of Jumbled Indexing}
\author{Amihood Amir\inst{1}\thanks{Supported by NSF grant CCR-09-04581, ISF grant 347/09, and BSF grant 2008217.} \and Timothy M. Chan\inst{2}\thanks{Part of the work of this author was done while visiting the Department of Computer Science and Engineering,
Hong Kong University of Science and Technology.}
 \and Moshe Lewenstein\thanks{This research was done in part while the author was on sabbatical in the U. of Waterloo. The research is supported by BSF grant 2010437 and GIF grant 1147/2011.}\inst{3} \and Noa Lewenstein\inst{4}}
\institute{Bar-Ilan University and Johns Hopkins University \and University of Waterloo \and Bar-Ilan University \and Netanya College}

\maketitle

\sloppy

\begin{abstract}
Jumbled indexing is the problem of indexing a text $T$ for queries that ask whether there is a substring of $T$ matching a pattern represented as a Parikh vector, i.e., the vector of frequency counts for each character. Jumbled indexing has garnered a lot of interest in the last four years; for a partial list see~\cite{ABP14,BFKL13,BCFL10,CFL09,CLWY12,DMMT14,GHLW13,GG13,HLRW14,KRR13,MR10,MR12}. There is a naive algorithm that preprocesses all answers in $O(n^2|\Sigma|)$ time  allowing quick queries afterwards, and there is another naive algorithm that requires no preprocessing but has $O(n\log|\Sigma|)$ query time. Despite a tremendous amount of effort there has been little improvement over these running times.

In this paper we provide good reason for this. We show that, under a 3SUM-hardness assumption, jumbled indexing for alphabets of size $\omega(1)$
requires $\Omega(n^{2-\epsilon})$ preprocessing time or $\Omega(n^{1-\delta})$ query time for any $\epsilon,\delta>0$. In fact, under a stronger 3SUM-hardness assumption, for any constant alphabet size $r\ge 3$ there exist describable fixed constant $\epsilon_r$ and $\delta_r$ such that jumbled indexing requires $\Omega(n^{2-\epsilon_r})$ preprocessing time or $\Omega(n^{1-\delta_r})$ query time.

\end{abstract}


\section{Introduction}

Equal length strings are said to {\em jumble-match} if they are commutatively equivalent (sometimes called Abelian equivalent), i.e., if one string can be obtained from the other by permuting its characters. A jumble match can be described using {\em Parikh vectors} which are vectors maintaining the frequency count of each alphabet character. Two strings jumble-match if they have the same Parikh vector. We also say that a string jumble-matches a Parikh vector $\psi$ if the string's Parikh vector is the same as $\psi$.

Parikh vectors were introduced in~\cite{Parikh66} and have been used to analyze grammars~\cite{KT10} and characterize commutative languages~\cite{Holub08}. Furthermore, jumbled pattern matching appears in various applications of computational biology,
such as SNP discovery~\cite{Bocker07}, analysis of similarities among different protein sequences~\cite{HASS04}, and automatic pattern discovery in biosequencing applications~\cite{ELP04}. It has also been examined in the streaming model~\cite{LLZ12}.

Jumbled pattern matching on its own can easily be solved by using a sliding window in linear time for the alphabet $\{1,\ldots,O(n)\}$, or $O(n \log |\Sigma|)$ time for a general alphabet $\Sigma$. In contrast, the exact pattern matching problem can only be solved in linear time via more complex techniques (e.g., see~\cite{KMP77,BM77}).
Jumbled pattern matching has also been studied along with other metrics (e.g., see \cite{BEL04,BLM14,AALS03}).

\subsection{Jumbled Indexing}

{\em Jumbled indexing (JI)}, currently under very active research, asks whether one can ``index'' jumbled matching. The goal is to preprocess a given text $S$ efficiently so that when given a Parikh vector $\psi$ one can quickly check whether there exists a substring of $S$ that jumble-matches $\psi$.

For classical exact matching, text indexing paradigms of linear size and with near linear query time (in the query size) exist since the introduction of suffix trees~\cite{Weiner73}. Many other efficient text indexing structures have been studied since then, such as suffix array~\cite{MM93}. Other matching problems have also been successfully transformed into efficient indexing paradigms. For example, {\em parameterized matching}  allows parametric symbols that are required to map to characters in a consistent manner. Parameterized matching was introduced by Baker~\cite{Baker96,Baker97} for detection of repetitive similar modules in software and has applications for color images~\cite{ACD02,BMK95,SB91} and approximate image search~\cite{HLS07}. Parameterized matching can be solved in linear time~\cite{AFM94}. In~\cite{Baker96} a parameterized suffix tree was introduced. Both the preprocessing and the query times are near-linear (where the latter is linear in the query size). Another example is order-preserving matchings, where two numerical strings match if their order is preserved. Efficient order-preserving matchings were presented in~\cite{KKRRW13} and recently an order-preserving index was introduced~\cite{CIKKLPRRW13} that can also be preprocessed in linear time with linear time queries (in the query size). Indexing with errors~\cite{CGL04} has proven to be somewhat harder.

Given that jumbled matching can be trivially solved in linear time, for the above-mentioned alphabets, one would expect that jumbled indexing would be a relatively easy problem. However, jumbled indexing is surprisingly difficult.

There are two naive methods to solve jumbled indexing. One is to use the sliding window technique mentioned above for every query that arrives. This can be done in $O(n)$ time if the alphabet is a subset of $[n]$, where $n$ is the text size. Another method is to preprocess all possible answers in advance by computing the Parikh vectors of every substring in $O(n^2|\Sigma|)$ time. Improving upon this has proven to be challenging even for constant-sized alphabets.

In an effort to make progress on JI the simplest version of the problem was considered, that of a binary alphabet. A neat property of a binary alphabet is that a Parikh vector $(i,j)$ appears in text $T$ iff $i$ is between the minimum and maximum number of $1$s over all substrings of length $i+j$. This was used in~\cite{CFL09} to obtain efficient query time by storing the minimum and maximum values of all possible lengths, yielding an index of $O(n)$ space and $O(1)$ query time. However, the preprocessing still took $O(n^2)$ time.

Burcsi et al.~\cite{BCFL10} and, independently, Moosa and Rahman~\cite{MR10} succeeded in improving the preprocessing time by a $\log$ factor to $O({n^2 \over \log n})$. They achieved this by reducing binary JI to (min,+)-convolution which can be solved in $O({n^2 \over \log n})$ time~\cite{BCDEHILT06}. Later, Moosa and Rahman~\cite{MR12} improved this to $O({n^2 \over \log^2 n})$ by using the four-Russians trick. Recently, Hermelin et al.~\cite{HLRW14} reduced the problem to (min,+)-matrix multiplication or all-pairs shortest paths,
but a similar reduction has already appeared in an earlier paper by Bremner et al.~\cite{BCDEHILT06};
with the latest breakthrough by Williams~\cite{Ryan13} on all-pairs shortest paths,
the preprocessing time for binary JI becomes  $O({n^2 \over 2^{\Omega((\log n / \log\log n)^{0.5})}})$.
 For the binary case there are also algorithms for run-length encoded strings~\cite{BFKL13,GG13} and for an approximate version of the problem~\cite{CLWY12}. The binary case was also extended to trees~\cite{GHLW13}.

Lately, there has been some progress also for non-binary alphabets. Kociumaka et al.~\cite{KRR13}
presented a solution  for JI for any constant-sized alphabet $\Sigma$ that uses $O({n^2\log^2\log n \over \log n})$ preprocessing time and space and answers queries in $O(({\log n \over \log\log n})^{2|\Sigma|-1})$ time. Amir et al.~\cite{ABP14} proposed a solution for constant-sized alphabets that preprocesses in $O(n^{1+\epsilon})$ time and answers queries in $\tilde{O}(m^{1 \over \epsilon})$ time, where $m$ is the sum of the Parikh vector elements. In an even newer paper, Durocher et al.~\cite{DMMT14} considered alphabet size $|\Sigma| = o(({\log n \over \log\log n})^2)$ and showed how to construct an index in $O(|\Sigma|({n\over \log_{|\Sigma|}n})^2)$ time and answer queries in $O(n^{\epsilon} + |\Sigma|)$ time, where $\epsilon>0$ is an arbitrary small constant.
This still leaves us in a sad state of affairs. In all the (exact) solutions mentioned
for $|\Sigma|\ge 3$ the time complexity of preprocessing or the time complexity of querying is always within polylogarithmic factors of one of the above two naive algorithms.
The question that has troubled the community in these last few years is whether jumbled indexing could be solved with $O(n^{2-\epsilon})$ preprocessing time and $O(n^{1-\delta})$ for some constants $\epsilon,\delta>0$.

In this paper we show that for alphabets of $\omega(1)$ size
this is impossible under a 3SUM-hardness assumption. We further show that for any constant alphabet size $r\ge 3$ there exist describable fixed constants $\epsilon_r$ and $\delta_r$ such that jumbled indexing requires $\Omega(n^{2-\epsilon_r})$ preprocessing time or $\Omega(n^{1-\delta_r})$ query time under a stronger 3SUM-hardness assumption.

\subsection{3SUM}

Numerous algorithmic problems have polynomial time upper bounds that we suspect are the best obtainable but proving matching lower bounds is difficult in classical computational models. Recently, a different approach for showing hardness (e.g.~\cite{WW10}) has been to choose an algorithmic problem that seem harder than others, e.g. maximum flow, APSP, edit distance, or 3SUM, and to use them as a hard primitive and reduce them to other problems that we would like to show are hard.

\medskip
In this paper we use the {\em 3SUM problem} defined as follows.

\label{sec:def}

\begin{itemize}
\item{\bf Input:} $x_1,x_2,\hdots, x_n$.

\item{\bf Output:} yes, if distinct $i, j$ and $k$ exist such that $x_i + x_j = x_k$. No, otherwise.
\end{itemize}

As far back as the mid 90's there were reductions from 3SUM, especially within the computational geometry community. Gajentaan and Overmars~\cite{GO95} were the first to reduce from 3SUM in order to provide evidence for near-quadratic
complexity for computational geometry problems such as minimum-area triangle, finding 3 collinear points, and determining whether $n$ axis-aligned rectangles cover a given rectangle. Others followed and quite a few problems are now known to be 3SUM-hard.

P\u{a}tra\c{s}cu~\cite{Patrascu10} pointed out that most of the reductions transform the condition $x_i + x_j = x_k$ into some geometric or algebraic condition by common arithmetic,
but it is difficult to use 3SUM for reductions to purely combinatorial problems, such as those on graphs or strings. To overcome this he defined {\em Convolution-3SUM}, a more restricted 3SUM version, which is just as hard as 3SUM in the sense that an
$O(n^{2-\epsilon})$-time
solution for Convolution-3SUM for some $\epsilon>0$
would imply an $O(n^{2-\epsilon'})$-time solution for 3SUM
for some $\epsilon'>0$~\cite{Patrascu10}.

\medskip

The {\em Convolution-3SUM  problem} is defined as follows.
\begin{itemize}
\item {\bf Input:} $x_1,\hdots, x_n$.

\item{\bf Output:} Yes, if there are distinct $i$ and $j$ such that $x_i + x_j = x_{i+j}$. No, otherwise.
\end{itemize}
By shuffling and changing indices an alternative equivalent output is:

\begin{itemize}
\item{\bf Output:} Yes, if there are distinct $i$ and $j$ such that $x_i - x_j = x_{i-j}$. No, otherwise.
\end{itemize}


We consider these problems in the RAM model with the elements belonging to an integer set $\{-u,\hdots, u\}$ as was assumed by others, e.g.~\cite{BDP05,Patrascu10}.
It is possible to achieve an algorithm of $O(u\log u)$ time for the 3SUM problem~\cite{BDP05} by
Fast Fourier transform. This can easily be transformed into an $O(nu\log(nu))$ time algorithm for Convolution-3SUM\@. P\u{a}tra\c{s}cu~\cite{Patrascu10} pointed out that the techniques of Baran et al.~\cite{BDP05} yield a (randomized) reduction, for the 3SUM problem, from a large domain $\{-u,\hdots, u\}$ to the domain of $\{-n^3,\hdots, n^3\}$. This reduction can be adapted for Convolution-3SUM from $\{-u,\hdots, u\}$ to  $\{-n^2,\hdots, n^2\}$. The reason is that in 3SUM there are $n^3$ triples to consider when bounding the number of false positives, but in Convolution-3SUM there are only $n^2$ triples ($x_i + x_j = x_{i+j}$) to consider.

Hence, Convolution-3SUM for input $\subset \{-u,\hdots, u\}$ is hard if $u \geq n^2$ and is easier than $O(n^2)$ for $u \ll n$. Convolution-3SUM for inputs $\subset \{-n,\hdots, n\}$ seems (though has not proven) to be as hard as the general case. This leads us to state two hardness assumptions. For both we assume, as in~\cite{BDP05,Patrascu10}, the Word RAM model with words of $O(\log n)$ bits.

\begin{itemize}
\item
{\bf 3SUM-hardness assumption:} Any algorithm for Convolution-3SUM requires $n^{2-o(1)}$ time in expectation to determine whether a set $\{x_1,\hdots, x_n\} \subset \{-n^2, \hdots, n^2\}$ contains a pair $x_i, x_j$ such that $x_i - x_j = x_{i-j}$.

\item
{\bf Strong 3SUM-hardness assumption:} Any algorithm for Convolution-3SUM requires $n^{2-o(1)}$ time in expectation to determine whether a set $\{x_1,\hdots, x_n\} \subset \{-n, \hdots, n\}$ contains a pair $x_i, x_j$ such that $x_i - x_j = x_{i-j}$.
\end{itemize}

\subsection{Preliminaries and Definitions}

Let $S$ be a string of length $n$ over an alphabet $\Sigma = \{\sigma_1, \sigma_2, \hdots, \sigma_{|\Sigma|}\}$.  An integer $i$ is a {\em location} or a {\em position} in $S$ if $i \in \{1, \ldots, \abs{S}\}$. The substring
$S[i \ldotdot j]$ of $S$, for any two positions $i \leq j$, is
the substring of $S$ that begins at index $i$ and ends at index $j$. The string generated by a character $a$ repeated $r$ times is shorthanded with $a^r$.

The {\em Parikh vector} of a string $S$ is $\psi(S) = (c_1(S), c_2(S), \hdots, c_{|\Sigma|}(S))$, where $c_i(S)$ is the count of occurrences of the $i$-th character of $\Sigma$. Two strings (of equal length) $S$ and $S'$ are said to {\em jumble-match} if they have the same Parikh vector. For a text $T$ and pattern $P$ we say that $P$ {\em jumble-matches at location $i$} if the substring $T[i \ldotdot i+|P|-1]$ jumble-matches $P$. {\em Jumbled pattern matching} refers to the problem where one is given a pattern and text and seeks all locations where the pattern jumble-matches.
For a Parikh vector $\psi = (c_1,\hdots, c_{|\Sigma|})$, we denote its length with $|\psi|$ which is $\Sigma_{i=1}^{|\Sigma|} c_i$.

{\em Jumbled indexing} (JI, for short), also known as {\em histogram indexing}, {\em Parikh
indexing}, or  {\em permutation indexing}, is defined as follows.


\begin{itemize}
\item
{\bf Preprocess:} a text $S$ over alphabet $\Sigma$.
\item
{\bf Query:} Given a vector $\psi \in \mathbb{N}^{|\Sigma|}$, decide whether there is a substring $S'$ such that the Parikh vector $\psi(S')$ is equal to $\psi$.
\end{itemize}

\section{Hardness of Jumbled Indexing}

\subsection{Outline}


We will show that, under the 3SUM-hardness assumption, one cannot improve the running time over the naive methods mentioned in the introduction by any polynomial factors for alphabets of super-constant size;
and for alphabets of constant size there are polynomial time lower bounds,
dependent on the alphabet size.

To achieve these results we reduce from 3SUM to JI\@. Naturally, we use Convolution-3SUM, which is more appropriate for problems with structure. A very high-level description of our reduction is as follows. A Convolution-3SUM  input is transformed to JI by hashing the input values to much smaller sized values by using mod over a collection of primes. These are then novelly transformed to a string. The queries on the string simulate testing matchings mod primes in parallel. Using mod primes causes several problems, which lead to interesting ideas to overcome these obstacles.

\subsection{Setup}

Let $x_1,x_2,\hdots, x_n$ be the input of the Convolution-3SUM problem such that each $x_i \in \{-n^2,\hdots, n^2\}$. Under the strong 3SUM-hardness assumption, each $x_i \in \{-n,\hdots, n\}$.

We choose a collection of roughly equal-sized primes $p_1, \hdots, p_k$ (for some choice of $k$) with their product $p_1\cdots p_k > n^2$ (or, under the strong 3SUM-hardness assumption, with $p_1\cdots p_k > n$). It is possible to choose $p_1, \hdots, p_k \in \Theta(n^{2/k})$
 (or, for the strong assumption, $p_1, \hdots, p_k \in \Theta(n^{1/k})$)
to satisfy this requirement for any given $k \leq {\log n \over \log\log n}$, because of the density of the primes.

The alphabet of JI in the reduction will consist of a character for each prime we choose, plus two more special characters we introduce later. Therefore, the JI alphabet size will be $|\Sigma| = k+2$.

The lemma below follows directly from properties of mod and will be instrumental in obtaining our result.

\begin{lemma}~\label{primes}
Let $p_1, \hdots, p_k$ be a set of primes such that $p_1\cdots p_k > u$ (with $u=n^2$ or $u=n$ depending on the hardness assumption). Let $i>j$. Then $x_i - x_j = x_{i-j} \iff \forall r: (x_i - x_j) \ {\rm mod}\  p_r = x_{i-j} \ {\rm mod}\  p_r$

$\iff \forall r:  (x_i \ {\rm mod}\  p_r) - (x_j \ {\rm mod}\  p_r) \in \begin{cases} (x_{i-j} \ {\rm mod}\  p_r)\\(x_{i-j} \ {\rm mod}\  p_r) - p_r\end{cases}$
\end{lemma}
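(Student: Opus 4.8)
The plan is to prove the two biconditionals separately, both by elementary number theory.

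For the first equivalence, $x_i - x_j = x_{i-j} \iff \forall r:\ (x_i - x_j) \bmod p_r = x_{i-j} \bmod p_r$, the forward direction is trivial: if two integers are equal, then they are congruent modulo every $p_r$. For the reverse direction, suppose $(x_i - x_j) \equiv x_{i-j} \pmod{p_r}$ for all $r$. By the Chinese Remainder Theorem (the $p_r$ are distinct primes, hence pairwise coprime), this gives $(x_i - x_j) \equiv x_{i-j} \pmod{p_1 \cdots p_k}$, i.e.\ $p_1 \cdots p_k \mid (x_i - x_j) - x_{i-j}$. Now I would bound the quantity $(x_i - x_j) - x_{i-j}$ in absolute value: since each $x_\ell \in \{-u, \dots, u\}$, we have $\abs{(x_i - x_j) - x_{i-j}} \le 3u < p_1 \cdots p_k$ — wait, the hypothesis only gives $p_1 \cdots p_k > u$, not $> 3u$, so the honest version is that one should choose the prime product to exceed $3u$ (or equivalently, absorb the constant $3$ into the choice of primes, which is harmless since $\Theta(n^{2/k})$ and $\Theta(u^{1/k})$ hide constants anyway, or rescale so the inputs lie in a slightly smaller range). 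Either way, the only integer in $(-p_1\cdots p_k,\ p_1\cdots p_k)$ divisible by $p_1\cdots p_k$ is $0$, so $(x_i - x_j) - x_{i-j} = 0$, which is the claim. The one subtlety worth flagging is this constant factor in the size bound; I expect this to be the only real "obstacle," and it is a cosmetic one.

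For the second equivalence, I would argue pointwise in $r$. Fix $r$ and write $a = x_i \bmod p_r$, $b = x_j \bmod p_r$, $c = x_{i-j} \bmod p_r$, all lying in $\{0, 1, \dots, p_r - 1\}$. The condition $(x_i - x_j) \bmod p_r = x_{i-j} \bmod p_r$ is equivalent to $a - b \equiv c \pmod{p_r}$. Now $a - b$ is an integer in the range $\{-(p_r-1), \dots, p_r - 1\}$, and $c \in \{0, \dots, p_r-1\}$, so the congruence $a - b \equiv c \pmod{p_r}$ holds if and only if $a - b = c$ or $a - b = c - p_r$: indeed these are precisely the two representatives of the residue class of $c$ that can possibly lie in the interval $\{-(p_r-1), \dots, p_r-1\}$ (the representative $c$ itself, and the one shifted down by $p_r$; note $c + p_r \ge p_r$ is too large, and $c - 2p_r < -(p_r-1)$ is too small, using $c \le p_r - 1$). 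Intersecting the per-$r$ statements over all $r$ yields the displayed case distinction. This half is a routine interval-counting argument with no hidden difficulty.

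Putting the two equivalences in sequence gives the lemma. In the write-up I would state the choice of primes to satisfy $p_1 \cdots p_k > 3u$ at the point where the collection is selected (rather than $> u$), since this is what the argument actually needs and costs nothing; alternatively one notes that $(x_i - x_j) - x_{i-j}$ already lies in $\{-2u, \dots, 2u\}$ when $i \ne j$ because one cannot have all three of $x_i, x_{-j}$-type extremes simultaneously, but the clean route is just to pad the prime product by a constant factor.
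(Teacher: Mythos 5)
Your proof is correct and is exactly the argument the paper is implicitly invoking; the paper offers no proof of this lemma (it merely asserts that it ``follows directly from properties of mod''), so there is no written proof to compare against, but CRT plus a magnitude bound for the first biconditional and an interval-counting argument for the second is the intended reasoning. You are also right to flag the constant: with $x_i, x_j, x_{i-j} \in \{-u,\ldots,u\}$ one has $(x_i-x_j)-x_{i-j}\in\{-3u,\ldots,3u\}$, so the hypothesis should read $p_1\cdots p_k > 3u$ rather than $>u$; this is harmless since the paper's prime choice is $\Theta(n^{2/k})$ (resp.\ $\Theta(n^{1/k})$) per prime, which absorbs the constant. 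One small correction to your closing aside: the claim that $(x_i-x_j)-x_{i-j}$ ``already lies in $\{-2u,\ldots,2u\}$'' does not hold, since all three indices $i$, $j$, $i-j$ can be distinct and can simultaneously take extreme values (e.g.\ $x_n=u$, $x_1=-u$, $x_{n-1}=-u$ gives $3u$); the clean fix is, as you say, just to pad the prime product.
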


\subsection{Reduction}

In the reduction to the JI instance we will generate an input string $S$ to be preprocessed and a set of $n$ queries $Q_1, \hdots, Q_n$ which we now describe.

\subsubsection{JI Input String}

We generate an input string $S$ based on the Convolution-3SUM input $x_1, \hdots, x_n$. For every prime $p_j$ we create a character $a_j$ and for each $x_i$ we create a substring
$$S_i = a_1^{\textit{EXP}(i,1)}a_2^{\textit{EXP}(i,2)}\cdots a_k^{\textit{EXP}(i,k)},$$
where
$$\textit{EXP}(i,j) = (x_{i+1}\ {\rm mod}\ p_j)-(x_i\ {\rm mod}\ p_j).$$

We note that, for the sake of simplicity, we are cheating since the exponent of a character in a string cannot be negative. We will shortly explain how to fix this.

Finally, we define
$$S = \$\#\ S_1\ \#\$\#\ S_2\ \#\$\#\ \cdots\ \#\$\#\ S_{n-1}\ \#\$,$$
where $\#$ and $\$$ are separator characters.

\medskip

The structure of $S$ is such that substrings beginning and ending within separators $\#\$\#$ have the property that the number of occurrences of each character is reminiscent of the requirements of Lemma~\ref{primes}.

\begin{lemma}\label{telescope}
Consider the substring of $S$, $R_{(j,i)} = \$\#\ S_j\ \#\$\#\ \hdots\ \#\$\#\ S_{i-1}\ \#\$ $. Each character $a_{\ell}$ has exactly $(x_i\ {\rm mod}\ p_{\ell})\ -\ (x_j\ {\rm mod}\ p_{\ell})$ occurrences in $R_{(j,i)}$.
\end{lemma}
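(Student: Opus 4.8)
The plan is to compute the Parikh vector of $R_{(j,i)}$ by summing the contributions of the building blocks $S_j, S_{j+1}, \ldots, S_{i-1}$ and observing a telescoping cancellation. First I would note that $R_{(j,i)}$ consists (up to the $\#$ and $\$$ separators, which contribute no $a_\ell$ characters) of the concatenation $S_j S_{j+1} \cdots S_{i-1}$. Since the number of occurrences of $a_\ell$ in a concatenation is the sum of the numbers of occurrences in each factor, we have
\[
c_\ell(R_{(j,i)}) \;=\; \sum_{t=j}^{i-1} \textit{EXP}(t,\ell) \;=\; \sum_{t=j}^{i-1} \bigl[(x_{t+1}\ \mathrm{mod}\ p_\ell) - (x_t\ \mathrm{mod}\ p_\ell)\bigr].
\]
The sum on the right telescopes: every intermediate term $(x_{t}\ \mathrm{mod}\ p_\ell)$ for $j < t < i$ appears once with a plus sign and once with a minus sign, so everything cancels except the first and last, leaving $(x_i\ \mathrm{mod}\ p_\ell) - (x_j\ \mathrm{mod}\ p_\ell)$, which is exactly the claimed count.

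There is one genuine gap to address, and it is the same ``cheating'' flagged in the text just before the definition of $S$: the exponent $\textit{EXP}(i,j)$ can be negative, so $S_i$ is not literally a well-defined string as written. I would handle this by recording here that the statement and its telescoping proof are to be read over the formal group of Parikh vectors (equivalently, integer exponent vectors), where concatenation corresponds to vector addition; the telescoping identity is purely an identity in $\mathbb{Z}$ and needs no positivity. The promised fix (shifting each $\textit{EXP}(i,j)$ by a common additive constant so that all exponents become nonnegative, and correspondingly adjusting the queries) will change each $c_\ell(R_{(j,i)})$ by a fixed amount depending only on $\ell$ and on $i-j$ (the number of blocks), and since the queries $Q_1,\ldots,Q_n$ are built with the same shift in mind, Lemma~\ref{primes} will still be faithfully simulated. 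For the purpose of this lemma, though, I would simply state the count in the ``idealized'' signed form and defer the bookkeeping of the shift to the paragraph where the correction is formally introduced.

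The main obstacle is thus not the computation — the telescoping is immediate — but making sure the bridge between this idealized lemma and the eventual corrected construction is airtight: one must be careful that the additive correction constant depends only on the block length $i-j$ (so that a single query can still test all relevant substrings of a given total length), and that the $\#$ and $\$$ separators, which do appear in $R_{(j,i)}$, are assigned their own alphabet symbols and therefore never interfere with the counts $c_\ell$ for $\ell \in \{1,\ldots,k\}$. Once those two points are nailed down, the lemma follows exactly as above, and it plugs directly into Lemma~\ref{primes} to characterize when a Convolution-3SUM witness exists in terms of the Parikh vector of some $R_{(j,i)}$.
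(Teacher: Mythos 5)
Your proof is correct and is essentially the paper's own argument: sum the $\textit{EXP}(t,\ell)$ contributions over the blocks $S_j,\ldots,S_{i-1}$ and telescope. Your extra discussion of the negative-exponent issue matches what the paper defers to the paragraph immediately following the lemma (the shift by $D$, contributing $D(i-j)$ per character), so nothing is missing.
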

\begin{proof}
The character $a_{\ell}$ has $\textit{EXP}(j,\ell)$ occurrences in $S_j$, $\textit{EXP}(j+1,\ell)$ occurrences in $S_{j+1}, \hdots$,  $\textit{EXP}(i-1,\ell)$ occurrences in $S_{i-1}$. Hence, we have $\Sigma_{d=j}^{i-1}\textit{EXP}(d,\ell)$ occurrences of $a_{\ell}$ in $R_{(j,i)}$.  Then $\Sigma_{d=j}^{i-1}\textit{EXP}(d,\ell) =$ $\Sigma_{d=j}^{i-1}(x_{d+1}\ {\rm mod}\  p_{\ell})-(x_d \ {\rm mod}\  p_{\ell})$, which telescopes to $(x_i\ {\rm mod}\ p_{\ell})\ -\ (x_j\ {\rm mod}\ p_{\ell})$.
\qed
\end{proof}


By combining Lemmas~\ref{primes} and~\ref{telescope} we can deduce the following.

\begin{corollary}~\label{conclusion}
There is a solution $x_i - x_j = x_{i-j}$ to the Convolution-3SUM  iff
the number of occurrences of each character $a_{\ell}$
in $R_{(j,i)}$ is in
$$\{(x_{i-j}\ {\rm mod}\  p_{\ell}), (x_{i-j}\ {\rm mod}\  p_{\ell})-p_{\ell}\}.$$
\end{corollary}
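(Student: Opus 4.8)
The plan is to simply chain the two lemmas together. First I would invoke Lemma~\ref{primes}, which gives the equivalence
$x_i - x_j = x_{i-j} \iff \forall r:\ (x_i \bmod p_r) - (x_j \bmod p_r) \in \{(x_{i-j}\bmod p_r),\ (x_{i-j}\bmod p_r) - p_r\}$.
So it remains only to rewrite the left-hand quantity $(x_i \bmod p_r) - (x_j \bmod p_r)$ in terms of the string $S$. That is exactly what Lemma~\ref{telescope} supplies: the number of occurrences of the character $a_\ell$ in the substring $R_{(j,i)}$ equals $(x_i \bmod p_\ell) - (x_j \bmod p_\ell)$, by the telescoping sum over the exponents $\textit{EXP}(d,\ell)$.

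Then I would substitute one into the other. Replacing $(x_i \bmod p_\ell) - (x_j \bmod p_\ell)$ by ``the number of occurrences of $a_\ell$ in $R_{(j,i)}$'' in the characterization from Lemma~\ref{primes}, and quantifying over all $\ell \in \{1,\dots,k\}$ (equivalently, over all characters $a_\ell$), yields precisely the statement that there is a solution $x_i - x_j = x_{i-j}$ iff the count of each $a_\ell$ in $R_{(j,i)}$ lies in $\{(x_{i-j}\bmod p_\ell),\ (x_{i-j}\bmod p_\ell) - p_\ell\}$. The hypothesis $p_1 \cdots p_k > u$ (with $u = n^2$ or $u = n$) needed for Lemma~\ref{primes} is already part of the setup, and the range constraint $x_{i-j} \in \{-u,\dots,u\}$ is guaranteed by the Convolution-3SUM input assumption, so the application is immediate.

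There is essentially no obstacle here; the corollary is a bookkeeping step that packages Lemmas~\ref{primes} and~\ref{telescope} into the form that will be convenient for designing the queries $Q_1,\dots,Q_n$ in the next part of the reduction. The only thing to be careful about is the ``cheating'' remark after the definition of $\textit{EXP}(i,j)$: since $\textit{EXP}(i,j)$ can be negative, the string $S$ as literally written is not well-defined, and one must remember that the occurrence counts in Lemma~\ref{telescope} are taken in the corrected construction (to be described), where every individual count $(x_i\bmod p_\ell) - (x_j\bmod p_\ell)$ is nonetheless a genuine nonnegative integer because $0 \le x_j \bmod p_\ell < p_\ell$ need not hold relative to $x_i \bmod p_\ell$ — so the fix is exactly what makes Lemma~\ref{telescope}, and hence this corollary, literally true. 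I would state the corollary as above and note that its proof is the direct combination of the two preceding lemmas.
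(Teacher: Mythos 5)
Your proposal is correct and matches the paper's approach: the paper itself introduces the corollary with the phrase ``By combining Lemmas~\ref{primes} and~\ref{telescope} we can deduce the following,'' and your substitution of the occurrence count from Lemma~\ref{telescope} into the characterization from Lemma~\ref{primes} is exactly that combination. Your remark about the negative-exponent ``cheating'' is also in line with the paper, which addresses it immediately afterward by shifting $\textit{EXP}(i,j)$ by $D$ and restating the corollary with the extra $D(i-j)$ term.
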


To fix the problem of the negative exponent we set $D = \max_{i=1}^k p_i$ and change $\textit{EXP}(i,j) = (x_{i+1}\ {\rm mod}\ p_j)-(x_i\ {\rm mod}\ p_j) + D$. Now, the exponent is not negative, but is still of order $\Theta(n^{2/k})$ (or $\Theta(n^{1/k})$  under the strong 3SUM-hardness assumption), which is the size of each prime. We leave it as an easy exercise to verify that Lemma~\ref{telescope} can be modified so that each character $a_{\ell}$ has exactly $(x_i\ {\rm mod}\ p_{\ell})\ -\ (x_j\ {\rm mod}\ p_{\ell}) + D(i-j)$ occurrences in $R_{(j,i)}$ and, in turn, that Corollary~\ref{conclusion} can be modified so that  $a_{\ell} \in \{(x_{i-j}\ {\rm mod}\  p_{\ell}) + D(i-j), (x_{i-j}\ {\rm mod}\  p_{\ell})-p_{\ell} + D(i-j)\}$.

\subsubsection{JI Queries}

We generate $n$ queries $\psi_1, \hdots, \psi_n$ for the jumbled indexing instance such that each $\psi_L$ represents $x_L$, an element of the Convolution-3SUM  input. The query $\psi_L$ will imitate a query on the Convolution-3SUM  data asking whether there exist $i$ and $j$ such that
\begin{enumerate}
\item[(a)] $x_L = x_i - x_j$ and
\item[(b)] $L=i-j$.
\end{enumerate}

We will also embed the query with data requiring that
\begin{enumerate}
\item[(c)] any substring that jumble-matches the query $\psi$ must be of the form $R_{(j,i)}$ from Lemma~\ref{telescope}.
\end{enumerate}

Obviously, answers to all queries $\psi_L$ will be sufficient to derive a solution to Convolution-3SUM.

To enforce (c) and (b) we use the separators of $S$, $\#$, and $\$$. For (c) we require the form of $R_{(j,i)}$  and for (b) we require $L=i-j$, which means that each potential substring $R_{(j,i)}$ should contain exactly $L$ parts $S_h$.

\begin{observation}~\label{obs:query}
Any substring of $S$ that jumble-matches the query $\psi$, where $\psi$ has $L+1$ for $\$$ and $2L$ for $\#$, must be of the form $R_{(j,i)}$ (of Lemma~\ref{telescope}) and must satisfy $L=i-j$.
\end{observation}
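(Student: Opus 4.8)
The plan is to analyze the structure of $S = \$\#\ S_1\ \#\$\#\ S_2\ \#\$\#\ \cdots\ \#\$\#\ S_{n-1}\ \#\$$ with respect to the two separator characters $\#$ and $\$$, and show that the counts of these two characters in a substring force that substring to be exactly some $R_{(j,i)}$ with $i-j = L$. First I would observe that in $S$ the separators appear in a rigid periodic pattern: reading left to right, the blocks $S_1,\ldots,S_{n-1}$ (each consisting only of the characters $a_1,\ldots,a_k$) are interleaved with the fixed separator words, and crucially every internal gap between consecutive blocks $S_h$ and $S_{h+1}$ is exactly $\#\$\#$, while the two ends are $\$\#$ on the left and $\#\$$ on the right. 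So any substring containing $d$ complete blocks $S_h$ and lying between separators contributes a predictable number of $\#$'s and $\$$'s.

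The key counting step is to establish that a substring of $S$ has exactly $L+1$ occurrences of $\$$ and exactly $2L$ occurrences of $\#$ \emph{if and only if} it has the form $R_{(j,i)} = \$\#\ S_j\ \#\$\#\ \cdots\ \#\$\#\ S_{i-1}\ \#\$$ with $i - j = L$. In the forward direction I would argue as follows: a substring $W$ of $S$ that jumble-matches $\psi$ must contain $L+1 \ge 2$ copies of $\$$, hence in particular contains at least one $\$$, so $W$ starts somewhere at or before its first $\$$ and ends somewhere at or after its last $\$$; between the first and last $\$$ of $W$ there are exactly $L+1$ dollars and therefore exactly $L$ internal gaps, each of which in $S$ has the form $\#\,S_h\,\#$ contributing two $\#$'s, giving $2L$ hashes strictly between the extreme $\$$'s. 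Since $W$ already has exactly $2L$ hashes total, $W$ cannot extend past the first $\$$ to the left or past the last $\$$ to the right (any such extension would pick up part of an adjacent separator block $\#\$\#$ and add at least one more $\#$, or would add a $\$$). Hence $W$ begins exactly at a $\$$ and ends exactly at a $\$$, and the $L$ internal $S_h$-blocks between them are complete; writing the first block as $S_j$ and the last as $S_{i-1}$ we get $i-1 - j + 1 = L$, i.e. $i - j = L$, and $W$ is exactly $R_{(j,i)}$. The converse is an immediate direct count on $R_{(j,i)}$: it has $L+1$ dollars and $2L$ hashes when $i-j=L$.

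The main obstacle I expect is handling the boundary/edge cases cleanly rather than any deep idea: one must rule out substrings that start or end \emph{inside} a block $S_h$ (those would still have the same separator counts but a truncated first or last block), and the argument above handles this precisely because a substring starting inside $S_h$ but containing $L+1$ dollars would have its first $\$$ strictly after its left endpoint, meaning the character immediately before that first $\$$ inside $W$ is a $\#$ (the left member of the separator $\#\$\#$), adding a $(2L+1)$-th hash and contradicting the count; symmetrically on the right. One also needs $L \ge 1$ so that $\psi$ is a meaningful query, and to note that the positions of $\$$ in $S$ are such that consecutive dollars are separated by exactly one $\#S_h\#$, which is exactly the periodicity built into the definition of $S$. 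Once these boundary cases are dispatched, Observation~\ref{obs:query} follows, and combining it with Corollary~\ref{conclusion} (adjusted for the $+D(i-j)$ shift) will let the query $\psi_L$, which additionally encodes in its $a_\ell$-components the admissible count values from Corollary~\ref{conclusion} for the specific $L = i-j$, detect precisely whether $x_L = x_i - x_j$ for some valid pair.
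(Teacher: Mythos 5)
Your proof is correct and follows essentially the same counting approach as the paper: both argue that the $L+1$ dollars force exactly $2L$ hashes between the extreme $\$$'s, so the substring cannot extend past either extreme $\$$ without picking up an extra $\#$, hence it must begin and end with $\$$ and contain exactly $L$ complete $S_h$-blocks. Your version is a bit more explicit about the boundary cases (ruling out starting or ending inside an $S_h$ or a separator), but it is the same argument.
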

\begin{proof}
Let $R$ be a substring of $S$ such that $R$ jumble-matches $\psi$. Then each set of separators $\#\$\#$ fully contained in $R$ contributes twice as many $\#$'s than $\$$'s. Since our query asks for $L+1$ $\$$'s but only $2L$ $\#$'s, it must be that $R$ begins and ends with a $\$$ and hence is of the form $R_{(j,i)}$. Moreover, since there are $L+1$ $\$$'s and $R$ begins and ends with a $\$$, there must be exactly $L$ parts $S_h$ in $R$, implying that $L = i-j$.
\qed
\end{proof}


It remains to show how to adapt the query in order to enforce (a) $x_L = x_i - x_j$. Here we will use Corollary~\ref{conclusion}. It is sufficient to find the substrings $R_{(j,i)}$ such the number of occurrences of each character $a_{\ell}$ is either $((x_i - x_j)\ {\rm mod}\  p_{\ell}) + DL$  or  $((x_i - x_j)\ {\rm mod}\  p_{\ell})-p_{\ell} +DL$. However, checking two options (for each $a_{\ell}$) cannot be done with one JI query. So, we split the query $\psi_L$ into $2^{k}$ queries $\psi_{L}^{(1)}, \hdots, \psi_{L}^{(2^k)}$ for the $2^k$ different equalities that satisfy Corollary~\ref{conclusion}.

Hence, we have overall $2^kn$ JI queries. These queries provide a full answer to the Convolution-3SUM problem.


\begin{theorem}~\label{thm:JI-Hardness}
Consider the jumbled indexing problem with text size $s$
and alphabet size $r\ge 5$. Then under the 3SUM-hardness assumption, one of the following holds
for any fixed $\epsilon>0$:

\begin{enumerate}
\item
the preprocessing time is $\Omega(s^{2-{4 \over r}-\epsilon})$, or
\item
the query time is $\Omega(s^{1-{2 \over r}-\epsilon})$.
\end{enumerate}

\end{theorem}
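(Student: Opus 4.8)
The plan is to assemble the reduction pieces from the preceding sections and then optimize the parameter $k$ (the number of primes) against the text length to produce the bound claimed in the theorem. First I would fix the Convolution-3SUM instance $x_1,\dots,x_n \subset \{-n^2,\dots,n^2\}$ guaranteed hard by the 3SUM-hardness assumption, and choose $k = r-2$ so that the JI alphabet $\Sigma = \{a_1,\dots,a_k,\#,\$\}$ has size exactly $r$. Since the product $p_1\cdots p_k$ must exceed $n^2$ and the primes can be taken in $\Theta(n^{2/k})$, I would record that $D = \max_i p_i = \Theta(n^{2/k})$ and that consequently each block $S_i$ has length $O(k \cdot n^{2/k})$; summing over the $n-1$ blocks and the $O(n)$ separators gives text length $s = O(n \cdot k \cdot n^{2/k}) = O(n^{1 + 2/k + o(1)})$, where I absorb the $k = O(\log n/\log\log n)$ factor into the $n^{o(1)}$ slack (this is exactly the regime the paper restricts $k$ to).

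Next I would invoke Observation~\ref{obs:query} and Corollary~\ref{conclusion} (in its shifted form with the $+D(i-j)$ correction) to argue correctness: the $2^k n$ queries $\psi_L^{(t)}$, where $\psi_L^{(t)}$ asks for $L+1$ copies of $\$$, $2L$ copies of $\#$, and for each $a_\ell$ one of the two values from Corollary~\ref{conclusion} according to the bit pattern $t$, together decide whether Convolution-3SUM has a solution — a ``yes'' answer to any single query $\psi_L^{(t)}$ certifies a triple, and conversely a genuine triple $x_i - x_j = x_{i-j}$ with $L = i-j$ is detected by the query whose bit pattern matches which of the two branches of Lemma~\ref{primes} holds for each prime. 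I would also note $2^k = 2^{r-2} = O(1)$ since $r$ is a constant, so the number of queries is $O(n)$.

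Then comes the timing argument. Suppose, for contradiction, that JI on a text of size $s$ admits preprocessing time $P(s)$ and query time $Q(s)$. Solving the Convolution-3SUM instance costs $P(s) + O(n)\cdot Q(s) + (\text{time to build } S \text{ and the queries})$, and the construction time is clearly $O(s\,\mathrm{polylog})$. Substituting $s = n^{1 + 2/k + o(1)} = n^{1 + 2/(r-2) + o(1)}$: if $P(s) = O(s^{2 - 4/r - \epsilon})$ then $P(s) = O(n^{(1+2/(r-2))(2-4/r) - \epsilon'})$, and a short arithmetic check shows $(1 + \tfrac{2}{r-2})(2 - \tfrac{4}{r}) = 2$ (this is the key algebraic identity: $(r/(r-2)) \cdot (2(r-2)/r) = 2$), so $P(s) = O(n^{2-\epsilon'+o(1)})$; similarly $Q(s) = O(s^{1 - 2/r - \epsilon})$ multiplied by $n$ gives $O(n^{1 + (1+2/(r-2))(1-2/r) - \epsilon'}) = O(n^{2 - \epsilon' + o(1)})$ using the companion identity $(1+\tfrac{2}{r-2})(1-\tfrac{2}{r}) = 1$. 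Either way we solve Convolution-3SUM in $n^{2-\Omega(1)}$ time, contradicting the assumption; hence one of the two stated lower bounds must hold.

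The main obstacle I anticipate is purely bookkeeping rather than conceptual: one must be careful that the $n^{o(1)}$ slack genuinely swallows the $\mathrm{poly}(k)$ and $\mathrm{polylog}(n)$ overheads (legitimate because $k \le \log n/\log\log n$ forces $k = n^{o(1)}$), that the shift by $D(i-j)$ does not break the ``$L$ parts $S_h$'' counting argument in Observation~\ref{obs:query} (it does not, since the separators alone pin down $i-j = L$ independently of the $a_\ell$ counts), and that the two exponent identities above are matched exactly to the choice $k = r - 2$ — any off-by-one in the alphabet accounting would shift the exponents $2-4/r$ and $1-2/r$. I would double-check the edge constraint $r \ge 5$ ensures $k = r-2 \ge 3$, matching the range where the primes-density argument and the $2^k = O(1)$ bound both apply cleanly.
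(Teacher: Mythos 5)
Your proposal is correct and follows essentially the same route as the paper's proof: set $k=r-2$ primes of size $\Theta(n^{2/k})$, observe $s=\tilde{O}(n^{r/(r-2)})$ (your identity $(1+\tfrac{2}{r-2})(2-\tfrac{4}{r})=2$ is exactly the paper's rewriting $n=\tilde\Omega(s^{1-2/r})$), answer the $2^k n$ queries, and conclude $P(s)+2^k n\,Q(s)\ge n^{2-o(1)}$. The one small difference is that you restrict to constant $r$ (so $2^k=O(1)$), whereas the paper assumes WLOG $r\le \log s/\log\log s$ and uses $2^k\in o(n^\epsilon)$, which lets the same theorem statement cover super-constant alphabets as well; and the paper's reason for $r\ge5$ is simply that the resulting exponents become vacuous for $r\le4$, not an issue with the prime-density argument.
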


\begin{proof}
Without loss of generality, assume that $r\le {\log s\over\log\log s}$ (otherwise, $s^{1\over r}=\tilde{\Theta}(1)$ and we may as well make $r$ equal to ${\log s\over\log\log s}$).

Let $x_1,\ldots,x_n \in \{-n^2,\hdots, n^2\}$ be the input of the Convolution-3SUM problem. We apply the above reduction and generate the string $S$ as described. Denote its length by $s$.  Recall that the alphabet size is $r=|\Sigma|=k+2$, where $k$ is the number of primes (the 2 is for the separators $\$$ and $\#$).
Since each prime $p_i \in\Theta(n^{2/k})$,  we have $s=O(kn^{2/k}n) = \tilde{O}(n^{r \over r-2})$ for $k+2=r\in o(\log n)$.  In other words, $n=\tilde{\Omega}(s^{1-{2\over r}})$.

Applying the preprocessing and subsequently answering all $2^kn$ defined queries yields a solution to the Convolution-3SUM  problem.
Letting $P(s)$ and $Q(s)$ be the preprocessing and query time, we then have
$P(s)+2^k n Q(s) \ge \Omega(n^{2-\epsilon})$.

For $k+2=r \in o(\log n)$ we note that $2^k \in o(n^{\epsilon})$.  We must thus have $P(s)\ge\Omega(n^{2-\epsilon})=\Omega(s^{2(1-{2\over r})-O(\epsilon)})$ or $Q(s)\ge\Omega(n^{1-O(\epsilon)})=\Omega(s^{1-{2\over r}-O(\epsilon)})$.
%
%
%
\qed
\end{proof}

Note that for $r\le 4$, the bound in the above theorem becomes vacuous.
We can get somewhat better bounds under the strong 3SUM-hardness assumption.

\begin{theorem}\label{thm3}
Consider the jumbled indexing problem with text size $s$
and alphabet size $r\ge 4$. Then under the strong 3SUM-hardness assumption, one of the following holds for any fixed $\epsilon>0$:

\begin{enumerate}
\item
the preprocessing time is $\Omega(s^{2-{2 \over r-1}-\epsilon})$, or
\item
the query time is $\Omega(s^{1-{1\over r-1}-\epsilon})$.
\end{enumerate}
\end{theorem}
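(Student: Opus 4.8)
The plan is to rerun the reduction behind Theorem~\ref{thm:JI-Hardness} essentially verbatim, but to feed it the harder Convolution-3SUM instance supplied by the \emph{strong} 3SUM-hardness assumption, whose values lie in the smaller range $\{-n,\ldots,n\}$. The only quantitative change to the construction is in the choice of primes: Lemma~\ref{primes} now merely requires $p_1\cdots p_k > n$ rather than $p_1\cdots p_k > n^2$, so by the density of the primes we may take $p_1,\ldots,p_k\in\Theta(n^{1/k})$ for any $k\le\log n/\log\log n$. Everything else in the reduction -- the string $S=\$\#\,S_1\,\#\$\#\cdots\#\$\#\,S_{n-1}\,\#\$$, the shift by $D=\max_i p_i$ that makes the exponents nonnegative, Lemma~\ref{telescope}, Corollary~\ref{conclusion}, Observation~\ref{obs:query}, and the splitting of each $\psi_L$ into the $2^k$ queries $\psi_L^{(1)},\ldots,\psi_L^{(2^k)}$ -- carries over unchanged, so that preprocessing $S$ and answering the resulting $2^kn$ queries still solves the Convolution-3SUM instance.

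The next step is to recompute the text length with the new prime size. With $k=r-2$ primes each of size $\Theta(n^{1/k})$, each block $S_i$ has $O(kn^{1/k})$ characters (the $O(1)$ separators are negligible), and there are $n-1$ blocks, so $s=O(k\,n^{1+1/k})=\tilde O(n^{(r-1)/(r-2)})$ for $r\in o(\log n)$; equivalently $n=\tilde\Omega(s^{1-1/(r-1)})$. As in Theorem~\ref{thm:JI-Hardness} we may assume $r\le\log s/\log\log s$, since otherwise $n^{1/(r-2)}=\tilde\Theta(1)$ and we may as well raise $r$ to $\log s/\log\log s$.

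The accounting is then identical to that of Theorem~\ref{thm:JI-Hardness}. Letting $P(s)$ and $Q(s)$ be the preprocessing and query times, the strong 3SUM-hardness assumption gives $P(s)+2^k n\,Q(s)\ge\Omega(n^{2-\epsilon})$. Since $k=r-2\in o(\log n)$ we have $2^k\in o(n^\epsilon)$, so either $P(s)\ge\Omega(n^{2-\epsilon})=\Omega\!\left(s^{2(1-1/(r-1))-O(\epsilon)}\right)=\Omega\!\left(s^{2-2/(r-1)-O(\epsilon)}\right)$, or $Q(s)\ge\Omega(n^{1-O(\epsilon)})=\Omega\!\left(s^{1-1/(r-1)-O(\epsilon)}\right)$. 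Absorbing the $O(\epsilon)$ into a fresh arbitrary $\epsilon>0$ gives the two claimed bounds; the hypothesis $r\ge 4$ (equivalently $k\ge 2$) is exactly what makes $2-2/(r-1)>1$, so that the bound is non-trivial, and for $r\le 3$ it becomes vacuous.

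I do not expect a genuinely new obstacle beyond what Theorem~\ref{thm:JI-Hardness} already dealt with; the one point requiring care is the same as before, namely that the $2^k$-fold query blow-up and the $\mathrm{poly}(\log n)$ factors hidden inside $s$ (coming from $k$ and from the prime-density estimate) are all $n^{o(1)}$ when $r=o(\log n)$, so that they do not erode the exponent. This is precisely the check carried out above, just with $n$ in place of $n^2$ in the prime-size bound.
\qed
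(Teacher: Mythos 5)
Your proposal is correct and takes exactly the same approach as the paper, which itself only says ``The proof is the same as in the previous theorem, but with $p_i \in \Theta(n^{1/k})$.'' You have merely (correctly) spelled out the resulting recomputation of $s=\tilde O(n^{(r-1)/(r-2)})$, i.e., $n=\tilde\Omega(s^{1-1/(r-1)})$, and the ensuing exponents.
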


The proof is the same as in the previous theorem, but with $p_i \in \Theta(n^{1/k})$.

By the same proof and further calculations,
we can also get a slightly strengthened lower bound of
$\Omega(s^2/2^{O(\sqrt{\log s})})$ preprocessing time or $\Omega(s/2^{O(\sqrt{\log s})})$
query time for alphabet size $r=\Theta(\sqrt{\log s})$, under the assumption that 3SUM has an $\Omega(n^2/2^{O(\sqrt{\log n})})$ lower bound.




%

\subsection{Hardness of JI with Alphabet Size 3}

The reduction we have presented contains two separators in the string $S$. Recall that for every prime we also construct a character. We require that the multiplication of the primes be $>n$ for strong 3SUM-hardness and $>n^2$ for 3SUM-hardness. However, if a prime is of order $\Omega(n)$ then the size of the string $S$ would be $\Omega(n^2)$, too large to gain anything from the reduction. Hence, we need at least two primes for strong 3SUM-Hardness and three primes for 3SUM-hardness. In this section we generate a string which requires only one separator, and for 2 primes $p$ and $q$ of size $\Theta(\sqrt{n})$ this yields a nontrivial result
 under the strong 3SUM-hardness assumption.

While we construct a different string for JI and need to argue a claim similar to Lemma~\ref{telescope} and Observation~\ref{obs:query}, the structure of the proof remains the same.

Let $a$ be a character representing prime $p$, and $b$ be a character that represents prime $q$, and $\#$ be a separator character.  Let $D=\max\{p,q\}$.
Define

\medskip

$S_i = (a\#)^{(x_{i+1}\ {\rm mod}\ p)-(x_i\ {\rm mod}\ p) +D}(b\#)^{(x_{i+1}\ {\rm mod}\ q)-(x_i\ {\rm mod}\ q) +D}$ and

$S = \#^Da^{2D}\#^D\ S_1\ \#^Da^{2D}\#^D\ S_2\ \#^Da^{2D}\#^D\ \cdots\ \#^Da^{2D}\#^D\ S_{n-1}\ \#^Da^{2D}\#^D,$
where $\#^Da^{2D}\#^D$ is the separator ($a$ has a double role).

\medskip

Define $R_{(j,i)} = \#^D\ S_j\ \#^Da^{2D}\#^D\ \cdots\ \#^Da^{2D}\#^D\ S_{i-1}\ \#^D$.

\medskip

It is easy to verify, similar to Lemma~\ref{telescope}, that $a$ has exactly $(x_i\ {\rm mod}\ p)\ -\ (x_j\ {\rm mod}\ p) + D(i-j) + 2D(i-j-1)$ occurrences in $R_{(j,i)}$ and $b$ has exactly  $(x_i\ {\rm mod}\ q)\ -\ (x_j\ {\rm mod}\ q) + D(i-j)$ occurrences in $R_{(j,i)}$. Hence, as in Corollary~\ref{conclusion}, there is a solution $x_i - x_j = x_{i-j}$ to  Convolution-3SUM  iff the number of occurrences of $a$ in $R_{(j,i)}$  is in $\{((x_i - x_j)\ {\rm mod}\  p) + D(3i-3j-2), ((x_i - x_j)\ {\rm mod}\  p)-p + D(3i-3j-2)\}$ and the number of occurrences of $b$ in $R_{(j,i)}$ is in $\{((x_i - x_j)\ {\rm mod}\  p) + D(i-j), ((x_i - x_j)\ {\rm mod}\  p)-p + D(i-j)\}$.

\medskip

The tricky part is to obtain an alternative to Observation~\ref{obs:query}. The difficulty stems from the fact that $a$ and $\#$ appear both in the separator part and in the~$S_i$'s.

\begin{observation}~\label{obs:three}
Say we have a Parikh vector $\psi = (n_1, n_2, n_1+n_2+4D)$  for $(a,b,\#)$, with $L=(n_1\ \mbox{\em div}\ 3D) + 1$. Then any substring of $S$ which jumble-matches $\psi$ is of the form $R_{(j,i)}$. Moreover $i-j = L$.
\end{observation}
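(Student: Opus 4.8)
The plan is to mimic the proof of Observation~\ref{obs:query}, but now I must be careful because the characters $a$ and $\#$ play a double role: they occur both inside the separator blocks $\#^Da^{2D}\#^D$ and inside the $S_h$'s (where $a$ contributes its ``prime'' role and $\#$ its ``delimiter'' role within $S_h$). So the first step is to understand the internal structure of $S$ well enough to pin down where a jumble-matching substring $R$ can begin and end. I would look at maximal runs of $\#$: inside $S$ the only place a run of $\#$ of length $\ge D$ can appear is at the boundary $\#^D a^{2D} \#^D$ (two such runs, each of length exactly $D$, separated by $a^{2D}$), or possibly two adjacent blocks of $\#^D$ abutting when an $S_h$ happens to contribute a trailing/leading $(b\#)$ or $(a\#)$ — but crucially inside an $S_h$ the $\#$'s are isolated (separated by single $a$'s or $b$'s), so one never gets a run of two $\#$'s inside $S_h$ proper. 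I would use this to argue that a substring matching $\psi$ with $n_1 + n_2 + 4D$ copies of $\#$ has enough $\#$'s that it must span several separator blocks, and then that it must begin and end aligned so that it takes the form $\#^D S_j \#^D a^{2D} \#^D \cdots \#^D a^{2D} \#^D S_{i-1} \#^D = R_{(j,i)}$; any other starting/ending alignment would throw off the delicate count of $\#$'s relative to $a$'s and $b$'s.

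The cleanest way to make this rigorous, I think, is a counting argument analogous to the ``$2L$ $\#$'s versus $L{+}1$ $\$$'s'' trick in Observation~\ref{obs:query}. Here there is no $\$$, so the bookkeeping is carried by the relationship between the number of $\#$'s, $a$'s, and $b$'s. The key identity to extract is: in a genuine $R_{(j,i)}$ with $i - j = L$, the number of $\#$'s is exactly (contributions from the $L$ blocks $S_j,\dots,S_{i-1}$) plus (contributions from the $L{+}1$ separator chunks $\#^D$, with $L-1$ of them being full $\#^D a^{2D} \#^D$ separators of $2D$ $\#$'s each, plus the two end pieces of $D$ $\#$'s each). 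One computes that the $\#$-count equals $n_1 + n_2 + 4D$ precisely when $a$ and $b$ have the counts $n_1, n_2$ claimed and the substring is exactly of the form $R_{(j,i)}$ — because each $(a\#)$ inside $S_h$ pairs one $a$ with one $\#$, each $(b\#)$ pairs one $b$ with one $\#$, and the ``extra'' $4D = 2D + 2D$ accounts for the two terminal $\#^D$ blocks plus... — I would set up the linear equation carefully so that the $\#$-count is an affine function of the $a$-count and $b$-count that is matched only by the correct alignment. Then the condition $L = (n_1 \,\mathrm{div}\, 3D) + 1$ recovers $i - j$: from the ``easy to verify'' count, $a$ has $(x_i \bmod p) - (x_j \bmod p) + 3D(i-j) - 2D$ occurrences in $R_{(j,i)}$, and since $0 \le (x_i \bmod p) - (x_j \bmod p) < p \le D$, dividing $n_1$ by $3D$ isolates $i-j-1$ (the $-2D$ and the residue together staying within one block of size $3D$), giving $i - j = L$.

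Concretely I would organize the proof as: (1) classify maximal $\#$-runs in $S$, showing every such run has length $\le D$ except the paired runs at block boundaries which are length exactly $D$ and separated by $a^{2D}$; (2) let $R$ match $\psi$ and use the total $\#$-count to force $R$ to contain at least two interior boundary patterns $\#^D a^{2D} \#^D$, hence $R$ starts and ends inside (or at the edge of) some $\#^D$ runs; (3) refine the alignment using the $a$ versus $\#$ balance: every $a$ strictly inside an $S_h$ comes glued to a $\#$ on its right, and the $a^{2D}$ in each interior separator comes glued to $\#^D$ on each side, so the only way to hit $a$-count $n_1$ and $\#$-count $n_1 + n_2 + 4D$ simultaneously is the full $R_{(j,i)}$ form with two complete terminal $\#^D$'s and no partial $a^{2D}$; (4) extract $i - j$ from $n_1 \,\mathrm{div}\, 3D$ via the explicit occurrence formula and the bound $(x_i \bmod p) - (x_j \bmod p) < p \le D$. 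I expect step (3) — the simultaneous alignment argument ruling out partial separator blocks and partial $S_h$'s at the two ends — to be the main obstacle, since it requires tracking several cases of where $R$'s endpoints can fall (inside a left $\#^D$, inside $a^{2D}$, inside a right $\#^D$, inside an $(a\#)$ or $(b\#)$ of some $S_h$) and showing each non-canonical case violates either the $\#$-count or the $a$-count; the reward is that once the form $R_{(j,i)}$ is fixed, step (4) and the connection back to Convolution-3SUM via Corollary~\ref{conclusion} are routine.
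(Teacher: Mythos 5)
Your approach is essentially the one the paper uses. The heart of the paper's proof is a balance argument: it defines $\Delta$ to be the excess of $\#$'s over $a$'s-plus-$b$'s, observes that every fully contained $S_h$ and every fully contained separator $\#^D a^{2D}\#^D$ has $\Delta=0$, so the entire positive imbalance in $\psi$ must be supplied by the two partial blocks at the ends, and only a full $\#^D$ at each end can supply the maximum per-end contribution of $D$. That is exactly your step~(3), and the subsequent extraction of $i-j$ from $n_1 \;\mathrm{div}\; 3D$ via the explicit $a$-count formula is identical to the paper's. Your steps~(1) (classifying maximal $\#$-runs) and~(2) (using the $\#$-count to force the substring to span several separators) are not needed once the balance argument is in hand --- the paper goes straight to step~(3) and lets it do all the work, hiding the remaining endpoint case analysis behind ``it is straightforward to confirm.''

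One remark about the place where you trailed off: your hesitation in accounting for the constant $4D$ is well-founded. For $R_{(j,i)}=\#^D\,S_j\,\#^D a^{2D}\#^D\cdots\#^D a^{2D}\#^D\,S_{i-1}\,\#^D$ as defined, the two terminal $\#^D$'s each contribute $D$ to the imbalance and nothing else is unbalanced, so $\Delta=2D$, not $4D$. The $4D$ appearing in the statement of the observation (and in the paper's proof, which opens with ``Define $\Delta=4D$'') looks like a slip for $2D$. This does not change the structure of the argument --- it only changes the constant in the queries' $\#$-coordinate --- but it is worth being precise about, since under the literal $4D$ offset no substring of $S$ would ever jumble-match the query.
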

\begin{proof}
Define $\Delta = 4D$ the difference between the number of $\#$'s and the number of $a$'s and $b$'s (put together). Let $x$ be a substring of $S$ which jumble-matches~$\psi$. Any $S_l$ or separator $\#^Da^{2D}\#^D$ fully contained in $x$ has a balanced number of $\#$'s and non-$\#$'s and, hence, does not affect $\Delta$. So, at either end there is a part of a separator or an $S_l$ which both together  contributes to $\Delta$. It is straightforward to confirm that the only way this is possible is having $x$ begin with $\#^D$, the prefix of a separator, and end with $\#^D$, the suffix of a separator. Hence, $x$ has the required form $R_{(j,i)}$.

We show that  $i-j = L = (n_1$ div $3D)+1$. We have claimed that the number of $a$'s in $R_{(j,i)}$ is $(x_i\ {\rm mod}\ p)\ -\ (x_j\ {\rm mod}\ p) + D(3i-3j-2)$. Hence, since $R_{(j,i)}$ jumble-matches $\psi$, we have $n_1 = (x_i\ {\rm mod}\ p)\ -\ (x_j\ {\rm mod}\ p) + D+3D(i-j-1)$. Since each $(x_i\ {\rm mod}\ p)\ -\ (x_j\ {\rm mod}\ p) +D\in [0, 2D)$ it follows that $(n_1$ div $3D) = i-j-1$.
\qed
\end{proof}


Finally, for a given $L$ all of our queries have $n_1 = (x_L\ {\rm mod}\  p) + D(3L-2)$ or $n_1 = (x_L\ {\rm mod}\  p) - p + D(3L-2)$ in location $a$ of the Parikh vector. In both cases, $L=(n_1$  div $3D)+1$, satisfying the requirement of Observation~\ref{obs:three}. Hence,

\begin{theorem}
Consider the jumbled indexing problem with text size $s$ and alphabet size $3$. Under the strong 3SUM-hardness assumption, one of the following holds for any fixed $\epsilon>0$:

\begin{enumerate}
\item
the preprocessing time is $\Omega(s^{{4 \over 3}-\epsilon})$, or
\item
the query time is $\Omega(s^{{2 \over 3}-\epsilon})$.
\end{enumerate}
\end{theorem}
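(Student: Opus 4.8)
The plan is to mirror the structure of the proof of Theorem~\ref{thm:JI-Hardness}, but now feeding in the alphabet-size-3 construction just developed (the string $S$ with separator $\#^D a^{2D} \#^D$, the substrings $R_{(j,i)}$, and Observation~\ref{obs:three}) rather than the original multi-separator construction. First I would start from a Convolution-3SUM instance $x_1,\ldots,x_n \in \{-n,\ldots,n\}$, as licensed by the strong 3SUM-hardness assumption, and pick two primes $p,q \in \Theta(\sqrt n)$ with $pq > n$; this is possible by prime density, exactly as in the Setup section. Then I would build the string $S$ over the 3-letter alphabet $\{a,b,\#\}$ as defined just above the theorem.

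Next I would bound the text length. Each $S_i$ has length $\Theta(D) = \Theta(\sqrt n)$ (the exponents $(x_{i+1}\bmod p)-(x_i\bmod p)+D$ and the $q$-analogue are both $\Theta(\sqrt n)$, and each is doubled by the interleaved $\#$'s), and each separator $\#^D a^{2D}\#^D$ also has length $\Theta(D)=\Theta(\sqrt n)$; there are $\Theta(n)$ of each, so $s = |S| = \Theta(n\sqrt n) = \Theta(n^{3/2})$, i.e. $n = \Theta(s^{2/3})$. Then I would set up the queries: for each $L \in \{1,\ldots,n\}$, form the Parikh vector $\psi_L$ with $\#$-count forced to be (number of $a$'s) $+$ (number of $b$'s) $+ 4D$ so that Observation~\ref{obs:three} applies, with the $a$-count chosen from the two residue options $((x_L\bmod p) + D(3L-2))$ or $((x_L\bmod p) - p + D(3L-2))$ and similarly the $b$-count from its two options around $D(L)$ — giving $4$ queries per value of $L$, hence $4n$ queries total. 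By Observation~\ref{obs:three} any substring matching $\psi_L$ is some $R_{(j,i)}$ with $i-j = L$, and by the corollary-style equivalence stated just above Observation~\ref{obs:three}, some $\psi_L$-variant is matched iff there exist $i>j$ with $x_i - x_j = x_{i-j}$ and $i-j=L$; ranging over all $L$ decides Convolution-3SUM.

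Finally I would run the timing argument identically to Theorem~\ref{thm:JI-Hardness}: if preprocessing takes $P(s)$ and each query takes $Q(s)$, then $P(s) + 4n\,Q(s) = \Omega(n^{2-\epsilon})$ under the strong assumption, and since $4n = O(n) = o(n^{2-\epsilon})$ is absorbed, we get $P(s) = \Omega(n^{2-\epsilon}) = \Omega(s^{4/3 - O(\epsilon)})$ or $Q(s) = \Omega(n^{1 - O(\epsilon)}) = \Omega(s^{2/3 - O(\epsilon)})$; renaming $O(\epsilon)$ as $\epsilon$ gives the stated bounds.

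I expect the main obstacle to be precisely what the paper flags before Observation~\ref{obs:three}: verifying that the single-separator construction still forces matched substrings to have the exact form $R_{(j,i)}$, since now $a$ and $\#$ occur both inside the $S_i$ blocks and inside the separator blocks, so the clean ``counting $\$$'s and $\#$'s'' argument of Observation~\ref{obs:query} is no longer available. The resolution — already sketched in the proof of Observation~\ref{obs:three} — is the balance/potential argument: every fully-contained $S_l$ or full separator contributes equal numbers of $\#$ and non-$\#$ characters, so the deficit $\Delta = 4D$ between $\#$'s and non-$\#$'s in $\psi$ can only be accounted for by the two boundary fragments, and a short case analysis shows the only consistent possibility is that the substring begins with the $\#^D$ prefix of a separator and ends with the $\#^D$ suffix of a separator. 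The second delicate point is extracting $L = i-j$ from the $a$-count: because $a$ plays a double role, its count in $R_{(j,i)}$ is $(x_i\bmod p) - (x_j\bmod p) + D(3i-3j-2)$, and one must check that the residue term $(x_i\bmod p)-(x_j\bmod p)+D$ lies in $[0,2D)$ so that integer division by $3D$ cleanly recovers $i-j-1$; this is exactly where $D = \max\{p,q\}$ is used. Both of these are carried out in the excerpt, so the remaining work for the theorem itself is just the length bound and the boilerplate timing inequality above.
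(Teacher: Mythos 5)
Your proposal is correct and follows essentially the same route as the paper: the paper's proof is just the observation that $p,q\in\Theta(\sqrt n)$ gives $s=O(n^{3/2})$ and then an appeal to the argument of Theorem~\ref{thm:JI-Hardness}, with the form-enforcement and $L=i-j$ extraction already handled by Observation~\ref{obs:three} and the preceding discussion. Your expanded version (the $4n$ queries, the length bound, and the timing inequality) fills in exactly the steps the paper leaves implicit.
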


\begin{proof}
Note that $p,q \in \Theta(\sqrt{n})$. Hence, $S$ is of length $s=O(n^{3\over 2})$. Following the same arguments as in Theorem~\ref{thm:JI-Hardness} yields the result. \qed
\end{proof}

\paragraph{Epilogue.}
In a forthcoming work, the second and third author will present new improved algorithms for
the jumbled indexing problem for any constant alphabet size $r\ge 2$ that achieves truly sublinear query time and
$O(n^{2-{2\over r+O(1)}})$ preprocessing time, thus nearly matching the lower bound in
Theorem~\ref{thm3}.

\bibliographystyle{plain}
\bibliography{Reduction3SUM}




\end{document}